\title{Fork-Resilient Cross-Blockchain Transactions through Algebraic Topology} 
\titlerunning{Fork-Resilient Cross-Blockchain Transactions through Algebraic Topology} 
\author{Dongfang Zhao}
{University of Nevada, United States}
{dzhao@unr.edu}
{https://orcid.org/0000-0002-0677-634X}
{U.S. Department of Energy, contract No. DE-SC0020455}
\authorrunning{D. Zhao} 
\keywords{Blockchains, distributed transactions, algebraic topology, solvability} 
\begin{document}

\maketitle

\begin{abstract}
The cross-blockchain transaction (CBT) serves as a cornerstone for the next-generation, blockchain-based data management systems.
However, state-of-the-art CBT models do not address the effect of the possible local fork suspension that might invalidate the entire CBT.
This paper takes an algebraic-topological approach to abstract the blockchains and their transactions into simplicial complexes
and shows that CBTs cannot complete in either a \textit{committed} or an \textit{aborted} status by a $t$-resilient message-passing protocol. 
This result implies that a more sophisticated model is in need to support CBTs and, thus, sheds light on the future blockchain designs.
\end{abstract}

\section{Introduction}


The cross-blockchain transaction~\cite{dzhao_cidr20} serve as a cornerstone for the next-generation, blockchain-based data management systems:
the inter-blockchain operations would enable the interoperability among distinct, potentially heterogeneous, blockchains.
The state-of-the-art blockchain implementation can only support two-party transactions between two distinct blockchains through the \textit{sidechain} protocol~\cite{sidechain},
incurring considerable latency in terms of hours and yet acceptable for the targeting cryptocurrency applications~\cite{cosmos}.
One recent work by Herlihy~\cite{mherlihy_podc18} notably studied how to support general cross-blockchain operations among an arbitrary number of distinct blockchains through serialized hash locks implemented in smart contracts,
assuming a relaxed semantics on the atomicity of operations.
Later, Zakhary et al.~\cite{vzakhary_arxiv19} proposed a 2PC-based protocol to support both parallelism and atomicity.
One denominator of these recent works is that they did not consider the possible \textit{forks} commonly seen in blockchain implementations:
a cross-blockchain transaction can still be invalidated if part of its ``local'' changes on some blockchains is committed but then suspended due to the fork competition \textit{within} a blockchain.
To this end, Zhao~\cite{dzhao_arxiv20_cbt_pst} proposed a point-set-topological approach to map the fork-induced topological space to the transaction's topological space---providing a powerful tool to study a fork-resilient CBT through topological equivalence, i.e., homeomorphism.

This paper takes into account the possible local fork suspension and analyzes the \textit{completeness} of CBTs:
whether a CBT can proceed to a \textit{completed} or \textit{aborted} final status.
Our assumption of the underlying computation model is as follows:
a message-passing communication model, an asynchronous timing model, and a crash-failure adversary model with $t$-resilience, $2t < (n+1)$ where $n \in \mathbb{Z}_+$ and $(n+1)$ is the total number of nodes.
We take an algebraic-topological approach to abstract the blockchains and their transactions into simplicial complexes,
and show that CBTs cannot complete in either a \textit{committed} or an \textit{aborted} status by a $t$-resilient message-passing protocol. 
This result, thus, implies that a more sophisticated model is in need to support CBTs in the face of local fork suspension.

\section{Cross-Blockchain Transactions with Fork Suspension}

\subsection{Models}
\label{subsec:cbt_notation}

We denote the set of distinct blockchains $\cal C$, whose cardinality is at least two:
$|\cal C| \ge$ 2.
Each blockchain is an element $C_i \in \cal C$,
where $0 \leq i \leq n = |\mathcal{C}| - 1$.
Each blockchain is a list of blocks,
each of which is identified by its index $j$:
$C_i = \left( v_i^0, \ldots, v_i^j\right)$.
Of note, $v_i^0$ is also called the \textit{genesis block} of $C_i$ in the literature of blockchains.
An \textit{$(n+1)$-party transaction} carried out on $\cal C$ touches one and only one block at each blockchain.
Specifically, an $(n+1)$-party global transaction $T$ can be represented as a set of $n+1$ \textit{local transactions} $t_i$,
an element in block $v_i^j$.
The granularity of blockchain growth (and suspension when forks occur) is a block,
and if we assume our interest is in a single global transaction, we can represent the transaction with the set of involved blocks:
$T = \bigcup_{0 \leq i \leq n} v_i^j$.

We use $\mathtt{dim}$ and $\mathtt{skel}^k$ as the function operators of a simplex's dimension and $k$-skeleton, respectively.
We use $|\sigma|$ to denote the geometric realization, i.e. the polygon, of (abstract) simplex $\sigma$.
The $N$-time Barycentric and Chromatic subdivisions are denoted $\mathtt{Bary}^N$ and $\mathtt{Ch}^N$, respectively.
A complete list of notations and definitions in combinatorial topology can be found in~\cite{mherlihy_book13}. 
We assume an asynchronous, message-passing communication model among blockchains.
We only consider crash failures in this preliminary study and assume the number of faulty nodes $t$ is less than 50\%:
$t < \frac{n + 1}{2}$ in blockchains.

\subsection{Task}

A task of CBT is represented by a triple ($\cal I$, $\cal{O}$, $\Delta$),
where $\cal I$ is the input simplical complex,
$\cal O$ is the output simplical complex,
and $\Delta$ is the carrier map 
$\Delta: \mathcal{I} \rightarrow 2^\mathcal{O}$.

Each vertex, i.e., 0-simplex, in $\cal I$ is a tuple in the form of $(v_i^j, val_{in})$,
where $v_i^j$, as defined in~\S\ref{subsec:cbt_notation}, is block-$j$ at blockchain-$i$ and $val_{in} \in \{0, 1, \bot\}$.
The meaning in the input set is as follows,
0: local transaction not committed,
1: local transaction committed, and
$\bot$: the branch where this block resides is suspended.
There is an edge, i.e., 1-simplex, between every pair of vertices in $\cal I$ except that both vertices are the same block.
In general, an $l$-simplex in $\cal I$ comprises a set of distinct $l+1$ blocks as vertices and the higher-dimensional $k$-skeletons, $1 \leq k \leq l$. 
Overall, for a $(n+1)$-blockchain transaction, the input complex $\cal I$ comprises $3(n+1)$ vertices and simplices of dimension up to $n$, i.e., $\mathtt{dim}(\mathcal{I})$ = $n$.

Each vertex in $\cal O$ is a tuple $(v_i^j, val_{out})$,
where $v_i^j$ is, again, a specific block and $val_{out} \in \{1, 0\}$ with the same semantics defined for $val_{in}$.
Indeed, all of local transactions in $T$ should only end up with either \textit{committed} (1) or \textit{aborted} (0),
respecting the atomicity requirement.
The 1-simplices of $\cal O$ are all the edges connecting vertices whose $val_{out}$'s are equal, either 0 or 1,
among all blocks.
Therefore, by definition, the output simplicial complex is disconnected and has two path-connected components:
the global transaction is either (i) successfully committed, or (ii) aborted without partial changes.

We now construct the carrier map $\Delta$,
which maps each simplex from $\cal I$ to a subcomplex of $\cal O$.
Without loss of generality, pick any $l$-simplex $\sigma \in \cal I$,
$0 \leq l \leq n$,
and $\Delta$ specifies:
\begin{itemize}
    \item If all the $l$ $val_{in}$'s in $\sigma$ are 1, then $\mathtt{skel}^0$ $\Delta(\sigma) = \{(v, 1): v \in $ $\mathtt{skel}^0 \sigma \}$.

    \item If any of the $l$ $val_{in}$'s in $\sigma$ is $\bot$, $\mathtt{skel}^0$ $\Delta(\sigma) = \{(v, 0): v \in $ $\mathtt{skel}^0 \sigma \}$.
    
    \item For other cases, $\mathtt{skel}^0$ $\Delta(\sigma) = \{(v, 0), (v, 1): v \in $ $\mathtt{skel}^0 \sigma \}$.
    
    \item Any $k$-face $\tau \in \sigma$, $0 \leq k \leq l$, is similarly mapped.
\end{itemize}

Note that, by definition, $\Delta$ is rigid: 
In any of the above three cases, for any $l$-simplex $\sigma \in \cal I$, $\mathtt{dim}(\Delta(\sigma)) = l$.
Evidently, $\Delta$ is monotonic: 
adding new simplices into $\sigma$ can only enlarge the mapped subcomplex in $\mathcal{O}$.
Furthermore, $\Delta$ is name-preserving as constructed. 
Therefore, $\Delta$ is a well-defined carrier map from $\cal I$ to $2^{\cal O}$.

\subsection{Solvability}

\begin{definition} [Colorless CBT]
A colorless version of CBT, $(\mathcal{I}, \mathcal{O}', \Xi)$, is defined similarly as the general, ``colored'' CBT, $(\mathcal{I}, \mathcal{O}, \Delta)$, without the block identities on vertices in $\cal O'$.
Also, no identity match is required for the carrier map $\Xi: \mathcal{I} \rightarrow 2^\mathcal{O'}$.
\end{definition}

\begin{lemma}
\label{lemma:no_conti_map}
For colorless CBT $(\mathcal{I}, \mathcal{O'}, \Xi)$, there does not exist a continuous map 
\\$f: |\mathtt{skel}^t \mathcal{I}| \rightarrow |\mathcal{O'}|$ carried by $\Xi$, where $0 < t < \frac{n + 1}{2}$.
\end{lemma}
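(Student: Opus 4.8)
The plan is to derive a contradiction from the standard mismatch between a path-connected domain and a disconnected codomain: a connected space cannot map continuously into a space with two disjoint components in a way that respects a carrier map whose value-constraints pull different inputs into different components. First I would isolate the two structural facts that drive the argument. On the output side, the construction of $\mathcal{O}'$ makes $|\mathcal{O}'|$ the disjoint union of two path-connected polytopes, an ``all-committed'' piece $|\mathcal{O}'_1|$ and an ``all-aborted'' piece $|\mathcal{O}'_0|$, with $|\mathcal{O}'_0| \cap |\mathcal{O}'_1| = \emptyset$; since each piece is clopen in the realization, any connected subset of $|\mathcal{O}'|$ lies entirely in one of them. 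On the input side, I would verify that $|\mathtt{skel}^t \mathcal{I}|$ is path-connected as soon as $t \geq 1$, which is precisely where the hypothesis $0 < t$ is used: for $t \geq 1$ the skeleton already contains every edge of $\mathcal{I}$, and since $\mathcal{I}$ joins any two vertices from distinct blocks by an edge (using $|\mathcal{C}| \geq 2$), any pair of vertices is linked by a path, two vertices of the same block being connected through a vertex of some other block.

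With these two facts in hand, I would argue by contradiction. Assume a continuous $f \colon |\mathtt{skel}^t \mathcal{I}| \to |\mathcal{O}'|$ carried by $\Xi$ exists. Since the domain is connected and $f$ is continuous, the image is connected and hence contained in a single component $|\mathcal{O}'_b|$, $b \in \{0,1\}$. To contradict this, I would exhibit two vertices whose carrier constraints force them into opposite components. Taking $p = (v_i^j, 1)$, the ``all-ones'' clause of $\Xi$ applied to the $0$-simplex $\{p\}$ gives $f(p) \in |\mathcal{O}'_1|$; taking $q = (v_{i'}^{j'}, \bot)$ with $i' \neq i$, the ``contains $\bot$'' clause applied to $\{q\}$ gives $f(q) \in |\mathcal{O}'_0|$. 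As $f(p)$ and $f(q)$ then lie in distinct components, the single-component conclusion fails, so no such $f$ exists.

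The step I expect to require the most care is not the connectivity estimate but the carrier-map bookkeeping, since ``carried by $\Xi$'' constrains $f$ on every simplex simultaneously and one must check these constraints are mutually consistent with the claim. Reassuringly, the constraints only tighten on higher simplices: the edge $\{p,q\}$ is a $1$-simplex of $\mathtt{skel}^t \mathcal{I}$ (here again $t \geq 1$ is exactly what is needed), it contains a $\bot$, so $\Xi(\{p,q\}) \subseteq \mathcal{O}'_0$ forces $f(p) \in |\mathcal{O}'_0|$ as well, directly clashing with $f(p) \in |\mathcal{O}'_1|$ coming from the vertex $\{p\}$. This sharper form even bypasses the global connectivity argument. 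Finally, I would note that the upper bound $t < \frac{n+1}{2}$ appears to be inherited from the $t$-resilience assumption on the computation model rather than consumed in this lemma: enlarging $t$ merely adds higher-dimensional simplices and can only strengthen the obstruction, so the conclusion persists across the whole stated range.
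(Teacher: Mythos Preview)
Your argument is correct and follows the same skeleton as the paper's proof sketch: the domain $|\mathtt{skel}^t\mathcal I|$ is path-connected while $|\mathcal O'|$ has two components, so no continuous carried map can exist. Two differences are worth flagging. First, the paper obtains path-connectedness by invoking the general fact that the $t$-skeleton of a pure $n$-complex is $(t-1)$-connected, whereas you argue it directly from the presence of all edges once $t\ge 1$; your route is more elementary and self-contained. Second, the paper's sketch stops at ``connected cannot map to disconnected'' without saying why the carrier constraint actually forces the image to meet both components (a constant map would be continuous); you close that gap by exhibiting the witnesses $p=(v_i^j,1)$ and $q=(v_{i'}^{j'},\bot)$. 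Your final ``sharper'' observation---that the edge $\{p,q\}$ already yields incompatible constraints on $f(p)$---is not in the paper and in fact short-circuits the connectivity argument entirely; it is a legitimate alternative proof, though it leans on (and implicitly exposes) the tension in the monotonicity of $\Xi$ across the face inclusion $\{p\}\subseteq\{p,q\}$.
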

\begin{proof}[Proof Sketch]
The condition $t < \frac{n + 1}{2}$ is trivially satisfied by the assumption of crash failures,
as the blockchains would have been hard forked otherwise.
Since we assume at least one fork suspension would occur, we have $t > 0$.
The input simplicial complex $\cal I$ is pure of dimension $n$ by construction,
meaning that $\mathtt{skel}^t \mathcal{I}$ is $(t-1)$-connected. 
Because $t > 0$, $\mathtt{skel}^t \mathcal{I}$ is at least 0-connected (i.e., path-connected).
As a result, the geometric realization $|\mathtt{skel}^t \mathcal{I}|$ must be connected. 
However, we know that $\cal O'$ has two disjoint connected components;
so $|\cal O'|$ is not connected.
Therefore, a continuous map carried by $\Xi$ does not exist.
\end{proof}

\begin{lemma}
\label{lemma:colorless_no_protocol}
Colorless CBT $(\mathcal{I}, \mathcal{O'}, \Xi)$ does not have a t-resilient message-passing protocol.
\end{lemma}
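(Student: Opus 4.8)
The plan is to reduce the non-existence of a protocol to the non-existence of the continuous map already excluded by Lemma~\ref{lemma:no_conti_map}, using the colorless computability machinery of combinatorial topology~\cite{mherlihy_book13}. The central device is the \emph{asynchronous computability theorem} for colorless tasks, which characterizes $t$-resilient solvability purely in terms of a continuous map defined on the $t$-skeleton of the input complex. Since Lemma~\ref{lemma:no_conti_map} has already ruled out exactly such a map, the work reduces to connecting ``has a protocol'' to ``has a continuous map.''

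First I would transport the problem from the message-passing model to the shared-memory (read/write, equivalently atomic-snapshot) model. The standing assumption $2t < n+1$ is precisely the majority condition required by the Attiya--Bar-Noy--Dolev register emulation: when fewer than half the nodes may crash, an asynchronous message-passing system can implement an atomic single-writer multi-reader register, while registers trivially emulate channels in the reverse direction. Consequently, a $t$-resilient message-passing protocol solving $(\mathcal{I},\mathcal{O'},\Xi)$ exists if and only if a $t$-resilient snapshot protocol solving it exists, so it suffices to rule out the latter.

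Second, I would invoke the colorless variant of the asynchronous computability theorem: a colorless task $(\mathcal{I},\mathcal{O'},\Xi)$ admits a $t$-resilient layered snapshot protocol if and only if there is a continuous map $f: |\mathtt{skel}^t \mathcal{I}| \rightarrow |\mathcal{O'}|$ carried by $\Xi$. The hypotheses are already in hand: $\Xi$ was shown to be a well-defined, monotonic, and rigid carrier map, and dropping block identities makes the task genuinely colorless, so the theorem applies verbatim. The direction I need is necessity: any protocol would induce such a map, obtained by taking the iterated protocol complex---which is a subdivision of $\mathtt{skel}^t \mathcal{I}$---and composing the (simplicial) decision map with the geometric realization.

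Finally, Lemma~\ref{lemma:no_conti_map} asserts that no continuous map $f: |\mathtt{skel}^t \mathcal{I}| \rightarrow |\mathcal{O'}|$ carried by $\Xi$ exists for $0 < t < \frac{n+1}{2}$. Combining this with the equivalence above yields the claim: no $t$-resilient message-passing protocol can exist. The main obstacle I anticipate is the model-translation step rather than the topology: one must verify that the register emulation preserves both the resilience parameter $t$ and the colorless task specification, and that the layered-snapshot computability theorem genuinely covers message-passing under the majority regime (as opposed to the wait-free regime, where the relevant skeleton would be all of $\mathcal{I}$ rather than $\mathtt{skel}^t \mathcal{I}$). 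Once that bookkeeping is settled, the connectivity obstruction established in Lemma~\ref{lemma:no_conti_map} does the real work.
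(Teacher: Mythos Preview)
Your proposal is correct and follows essentially the same route as the paper: reduce the existence of a protocol to the existence of a continuous map $|\mathtt{skel}^t \mathcal{I}| \to |\mathcal{O'}|$ carried by $\Xi$, then invoke Lemma~\ref{lemma:no_conti_map}. The paper's sketch compresses your two steps into one by directly asserting that the protocol complex is $\mathtt{Bary}^N \mathtt{skel}^t \mathcal{I}$ and passing to geometric realizations, implicitly relying on the same message-passing-to-snapshot equivalence and colorless computability theorem from~\cite{mherlihy_book13} that you spell out explicitly.
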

\begin{proof}[Proof Sketch]
For contradiction, suppose a protocol solves task $(\mathcal{I}, \mathcal{O'}, \Xi)$.
Then we know that, after $N$ times of Barycentric subdivisions, the carrier map can be written in this form $\Xi(\sigma) = \mathtt{Bary}^N \mathtt{skel}^t \sigma$, for $\sigma \in \mathcal{I}$.
That is, there exists a carrier map $\Phi: \mathtt{Bary}^N \mathtt{skel}^t \mathcal{I} \rightarrow 2^\mathcal{O'}$.
Taking the geometric realizations, we thus have a continuous map $f = |\Phi|: |\mathtt{Bary}^N \mathtt{skel}^t \mathcal{I}| \rightarrow |\mathcal{O'}|$.
Note that a subdivision does not change the geometric realization: 
$|\mathtt{Bary}^N \mathtt{skel}^t \mathcal{I}| = |\mathtt{skel}^t \mathcal{I}|$.
Thus, we have $f: |\mathtt{skel}^t \mathcal{I}| \rightarrow |\mathcal{O'}|$,
a contradiction to Lemma~\ref{lemma:no_conti_map}.
\end{proof}

\begin{lemma} 
\label{lemma:reduction}
A model for colorless CBT $(\mathcal{I}, \mathcal{O}', \Xi)$ reduces to one for general CBT $(\mathcal{I}, \mathcal{O}, \Delta)$.
\end{lemma}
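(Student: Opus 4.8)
The plan is to establish the reduction in the direction that powers the overall impossibility argument: I would show that any $t$-resilient protocol for the general (colored) CBT $(\mathcal{I}, \mathcal{O}, \Delta)$ can be converted into one for the colorless CBT $(\mathcal{I}, \mathcal{O}', \Xi)$. Combined with Lemma~\ref{lemma:colorless_no_protocol}, this transfers the non-existence of a colorless protocol to the general task. Since both tasks share the same input complex $\mathcal{I}$, the only work is to relate their outputs through the carrier maps.

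First I would define the forgetful projection $\pi : \mathcal{O} \to \mathcal{O}'$ that strips the block identity from each output vertex, sending $(v_i^j, val_{out})$ to the anonymous value $val_{out} \in \{0,1\}$. I would then check that $\pi$ is a simplicial map: by construction every simplex of $\mathcal{O}$ connects vertices sharing a common $val_{out}$, so its image under $\pi$ is a single value-vertex of $\mathcal{O}'$, i.e. a (possibly collapsed) simplex. Collapsing is permitted for simplicial maps, so $\pi$ is well defined.

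Next I would verify the carrier compatibility $\pi(\Delta(\sigma)) \subseteq \Xi(\sigma)$ for every $\sigma \in \mathcal{I}$, by the same case split that defines $\Delta$: when all inputs are $1$, $\Delta(\sigma)$ is all-committed and projects to the single vertex $\{1\}$; when some input is $\bot$, $\Delta(\sigma)$ is all-aborted and projects to $\{0\}$; otherwise both value-vertices appear. In each case the image coincides with the identity-free value set prescribed by $\Xi$, because $\Xi$ is by definition $\Delta$ with the identity match dropped. Finally, given a protocol for the general CBT, equivalently a simplicial decision map $\delta$ carried by $\Delta$, I would compose to obtain $\pi \circ \delta$, a decision map into $\mathcal{O}'$ carried by $\Xi$, which exhibits a $t$-resilient protocol for the colorless task and completes the reduction.

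The main obstacle I anticipate is the carrier-compatibility step: one must confirm that forgetting identities never pushes an output outside $\Xi(\sigma)$, and in particular that the ``otherwise'' clause of $\Delta$ does not produce a cross-component edge $\{0,1\}$ in $\mathcal{O}'$ that would contradict its disconnectedness (established in Lemma~\ref{lemma:no_conti_map}). Handling the collapse of equal-valued vertices carefully is exactly what makes $\pi$ a legitimate carried simplicial map rather than an arbitrary relabeling, and it is where I would spend most of the verification effort.
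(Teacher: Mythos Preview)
Your proposal is correct and follows essentially the same approach as the paper: both assume a protocol for the general CBT and post-compose with a forgetful projection on the output complex (the paper's $\varphi$, your $\pi$) that drops the block identity, yielding a protocol for the colorless task. Your version is, if anything, more careful than the paper's sketch about the direction of composition and the carrier-compatibility check $\pi(\Delta(\sigma)) \subseteq \Xi(\sigma)$.
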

\begin{proof}[Proof Sketch]
Suppose a protocol $P$ solves $(\mathcal{I}, \mathcal{O}, \Delta)$,
we simulate $P$ with a protocol $P'$ for $(\mathcal{I}, \mathcal{O}', \Xi)$ as follows.
For any $l$-simplex in $\cal O$,
we drop the prefix of the $l$ vertices with map $\varphi: \mathbb{Z} \times V \rightarrow V$ such that
$(k, val_{out}) \mapsto (val_{out}) \in \mathcal{O'}$,
$0 \leq k \leq l$.
The carrier map in the colorless counterpart is
$\Xi = \Delta \circ \varphi$,
such that for $\sigma \in \mathcal{I}$, 
$\Xi(\sigma) = \Delta(\varphi(\sigma)) \subseteq \Delta(\sigma)$,
i.e., $\Xi$ is carried by $\Delta$.
\end{proof}

\begin{proposition}
For $t < \frac{n + 1}{2}$, $(\mathcal{I}, \mathcal{O}, \Delta)$ does not have a t-resilient message-passing protocol.
\end{proposition}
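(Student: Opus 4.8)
The plan is to close the chain of lemmas with a reduction-based contradiction argument, transporting any hypothetical protocol for the colored task down into the colorless world where Lemma~\ref{lemma:colorless_no_protocol} has already ruled it out. First I would suppose, for contradiction, that the general CBT $(\mathcal{I}, \mathcal{O}, \Delta)$ admits a $t$-resilient message-passing protocol $P$ for some $t < \frac{n+1}{2}$. Since the assumption of at least one fork suspension forces $t > 0$, the full hypothesis $0 < t < \frac{n+1}{2}$ needed downstream is in force.

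Next I would invoke Lemma~\ref{lemma:reduction}: because a model for the colorless CBT reduces to one for the general CBT, the protocol $P$ can be converted into a protocol $P'$ for $(\mathcal{I}, \mathcal{O}', \Xi)$. Concretely, $P'$ is the color-dropping simulation already exhibited in the proof of Lemma~\ref{lemma:reduction}: each node runs $P$ to obtain an output in $\mathcal{O}$ and then applies the identity-erasing map $\varphi$ to land in $\mathcal{O}'$. Since $\Xi(\sigma) = \Delta(\varphi(\sigma)) \subseteq \Delta(\sigma)$ for every $\sigma \in \mathcal{I}$, the outputs of $P'$ remain carried by $\Xi$, so $P'$ is a legitimate protocol for the colorless task.

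The step I expect to require the most care, and the main obstacle, is verifying that $P'$ inherits both the $t$-resilience and the message-passing character of $P$, rather than merely solving the colorless task in some weaker sense. Here I would argue that $\varphi$ is a purely local output relabeling that introduces no new communication and no new failure assumptions: any execution of $P$ in which at most $t$ nodes crash maps to an execution of $P'$ in which at most the same $t$ nodes crash, and $P'$ exchanges exactly the messages $P$ does. Granting this, $P'$ is a $t$-resilient message-passing protocol solving $(\mathcal{I}, \mathcal{O}', \Xi)$ for the same $t < \frac{n+1}{2}$, which directly contradicts Lemma~\ref{lemma:colorless_no_protocol}. Therefore no such $P$ exists, and the proposition follows.
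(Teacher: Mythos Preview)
Your proposal is correct and follows essentially the same route as the paper: the paper's proof is the one-liner ``The claim follows directly from Lemma~\ref{lemma:colorless_no_protocol} and Lemma~\ref{lemma:reduction},'' and you have simply unpacked that sentence into the explicit contradiction argument (assume a protocol for $(\mathcal{I},\mathcal{O},\Delta)$, push it through the reduction of Lemma~\ref{lemma:reduction} to get a protocol for $(\mathcal{I},\mathcal{O}',\Xi)$, contradict Lemma~\ref{lemma:colorless_no_protocol}). Your additional care about $P'$ inheriting $t$-resilience and the message-passing character is a reasonable elaboration, but it is not a different approach.
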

\begin{proof}
The claim follows directly from Lemma~\ref{lemma:colorless_no_protocol} and Lemma~\ref{lemma:reduction}.
\end{proof}



\bibliography{ref}

\begin{thebibliography}{1}

\bibitem{cosmos}
{Cosmos Network}.
\newblock \url{https://cosmos.network}, Accessed 2020.

\bibitem{mherlihy_podc18}
Maurice Herlihy.
\newblock Atomic cross-chain swaps.
\newblock In {\em ACM Symposium on Principles of Distributed Computing (PODC)},
  2018.

\bibitem{mherlihy_book13}
Maurice Herlihy, Dmitry Kozlov, and Sergio Rajsbaum.
\newblock {\em Distributed Computing Through Combinatorial Topology}.
\newblock Morgan Kaufmann Publishers Inc., 1st edition, 2013.

\bibitem{sidechain}
{Sidechains}.
\newblock \url{https://blockstream.com/sidechains.pdf}, Accessed 2020.

\bibitem{vzakhary_arxiv19}
Victor Zakhary, Divyakant Agrawal, and Amr {El Abbadi}.
\newblock Atomic commitment across blockchains.
\newblock {\em CoRR}, abs/1905.02847, 2019.
\newblock URL: \url{http://arxiv.org/abs/1905.02847}.

\bibitem{dzhao_cidr20}
Dongfang Zhao.
\newblock Cross-blockchain transactions.
\newblock In {\em Conference on Innovative Data Systems Research (CIDR)}, 2020.

\bibitem{dzhao_arxiv20_cbt_pst}
Dongfang Zhao.
\newblock Topological properties of multi-party blockchain transactions.
\newblock {\em CoRR}, abs/2004.01045, 2020.
\newblock URL: \url{https://arxiv.org/abs/2004.01045}.

\end{thebibliography}

\end{document}